\title{The Complexity of Gerrymandering Over Graphs:\\ Paths and Trees\footnote{This work was initiated at the research retreat of the Algorithmics and Computational Complexity group held in September~2020 in Zinnowitz, Germany.}}
\author{Matthias Bentert \and Tomohiro Koana\footnote{Supported by the Deutsche Forschungsgemeinschaft (DFG), project FPTinP, NI 369/19.} \and Rolf Niedermeier}
\date{TU Berlin, Algorithmics and Computational Complexity, Germany\\\texttt{\small \{matthias.bentert,tomohiro.koana,rolf.niedermeier\}@tu-berlin.de}}
\newcommand{\probDef}[4]{
\begin{center}   
	\fbox{~\begin{minipage}{.95\textwidth}
		\vspace{2pt} 

		\noindent
		\normalsize\textsc{#1}
		
		\vspace{4pt}
		\setlength{\tabcolsep}{3pt}
		\renewcommand{\arraystretch}{1.0}
		\begin{tabularx}{\textwidth}{@{}lX@{}}
			\normalsize\textbf{In:} 	& \normalsize#2 \\
			\normalsize\textbf{#4:} 	& \normalsize#3
		\end{tabularx}
	\end{minipage}}
\end{center}
}
\newcommand{\decProb}[3]{\probDef{#1}{#2}{#3}{?}}
\newtheorem{theorem}{Theorem}
\DeclareMathOperator{\col}{col}
\newcommand{\N}{\mathds{N}}
\newcommand{\colo}[1]{\ensuremath{c_{#1}}}
\newcommand{\defeq}{:=}
\newcommand{\GerryJ}{{\normalfont\textsc{Gerrymandering over Graphs}}}
\begin{document}
\maketitle

\begin{abstract}
	Roughly speaking, 
	gerrymandering is the systematic manipulation of the boundaries of electoral 
	districts to make a specific (political) party win as many districts as possible.
    While typically studied from a geographical point of view, addressing 
	social network structures, the investigation 
	of gerrymandering over graphs was recently initiated by Cohen-Zemach et al.~[AAMAS~2018].
	Settling three open questions of Ito et al.~[AAMAS~2019], we classify the 
    computational complexity of the NP-hard problem \GerryJ{} when restricted to 
	paths and trees. Our results, which are mostly of negative nature (that is, 
	worst-case hardness), in particular yield two complexity dichotomies 
	for trees. For instance, the problem is polynomial-time solvable for two 
	parties but becomes weakly NP-hard for three.
	Moreover, we show that the problem remains NP-hard even when the input graph is a path.
\end{abstract}

\section{Introduction}
How to influence an election? One answer to this is gerrymandering~\cite{CD19,GKL19,Tas11}.
Gerrymandering is the systematic manipulation of the boundaries of 
electoral districts in favor of a particular party. 
It has been studied in the political sciences for decades~\cite{Sch87}.
In recent years, various models of gerrymandering were investigated from 
an algorithmic and computational perspective.
For instance, Lewenberg et al.\;\cite{LLR17} and Eiben et al.\;\cite{EFPS20}
studied the (parameterized) computational complexity of gerrymandering assuming that 
the voters are points in a two-dimensional space and the task is to place 
$k$~polling stations where each voter is assigned to the polling station closest to her.
Cohen-Zemach et al.\;\cite{ZLR18} introduced a version of gerrymandering over graphs (which may be seen 
as models of social networks) where the question is 
whether a given candidate can win at least~$\ell$ districts.
This leads to the question whether there is a partition of the graph
into $k$~connected subgraphs 
such that at least~$\ell$ of these are won by a designated candidate; herein, 
$k$~and~$\ell$ are part of the input of the computational problem. 
Cohen-Zemach et al.\ showed that this version is NP-complete even when restricted to planar graphs.
Following up on the pioneering work of Cohen-Zemach et al.\;\cite{ZLR18}, Ito et al.\;\cite{IKK19} performed a refined 
complexity analysis, particularly taking into account the special graph structures
of cliques, paths, and trees. Indeed, their formal model is slightly different 
from the one of Cohen-Zemach et al.\;\cite{ZLR18} and their work will be our main point of reference.
Notably, both studies focus on the perhaps simplest voting rule, Plurality. 

We mention in passing that earlier work also studied the special case of 
gerrymandering on grid graphs. More specifically, Apollonio et al.\;\cite{ABL09} analyzed gerrymandering in grid graphs where each district in the solution has to be of (roughly) the same size and they analyzed, focusing on two candidates (equivalently, two parties), the maximum possible win margin if the two candidates had the same amount of support. 
Later, Borodin et al.\;\cite{BLS18} also considered gerrymandering on grid graphs with two parties (expressed by colors red and blue), but here each vertex represents a polling station and thus is partially ``red'' and partially ``blue'' colored. They provided a worst-case analysis for a two-party situation in terms of the total fraction of votes the party responsible for the gerrymandering process gets. They also confirmed their findings with experiments.

To formally define our central computational problem, we continue with a few
definitions.
For a vertex-colored graph and for each color~$r$, let~$V_r$ be the set of~\mbox{$r$-colored} vertices. 
A vertex-weighted graph is~\emph{$q$-colored} if for each color~$r$ it holds that~${\sum_{v\in V_q} w(v) \geq \sum_{v\in V_r} w(v)}$.
A vertex-weighted graph is \mbox{\emph{uniquely~$q$-colored}} if~${\sum_{v\in V_q} w(v) > \sum_{v\in V_r} w(v)}$ for each color~$r \neq q$.
Thus, we arrive at the central problem of this work, going back to Ito et al.\;\cite{IKK19}.

\decProb{\GerryJ}
{An undirected, connected graph~$G = (V, E)$, a weight function~${w \colon V \rightarrow \N}$, a set~$C$ of colors, a target color~${p \in C}$, a coloring function~${\col \colon V \rightarrow C}$, and an integer~$k$.}
{Is there a partition~$\mathcal V$ of~$V$ into exactly~$k$ subsets~${V_1, \dots, V_k}$ such that every ${V_i \in \mathcal{V}}$,  ${i \in [k]}$, induces a connected subgraph in~$G$ and the number of uniquely~$p$-colored induced subgraphs exceeds the number of~$r$-colored induced subgraphs for each~${r \in C \setminus \{ p\}}$?}

\cref{fig:gerrymanderingExample} presents a simple example of \GerryJ{}. 
We remark that all our results except for \Cref{thm:pathnphard} (that is, the NP-hardness on paths) also transfer to 
the slightly different model of Cohen-Zemach et al.\;\cite{ZLR18}.\footnote{In fact, we conjecture that the gerrymandering problem of Cohen-Zemach et al.\;\cite{ZLR18} is polynomial-time solvable on paths.}

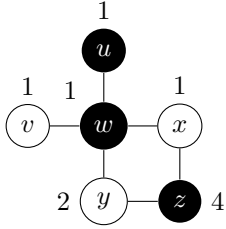
\begin{figure}[t]
	\centering
	\begin{tikzpicture}
		\node[circle, fill=black,label=$1$] at (0,1) (u) {\color{white} $u$};
		\node[circle,draw,label=$1$] at (-1,0) (v) {$v$};
		\node[circle,fill=black,label=above left:$1$] at (0,0) (w) {\color{white} $w$} edge (u) edge (v);
		\node[circle,draw,label=$1$] at (1,0) (x) {$x$} edge (w);
		\node[circle,draw,label=left:$2$] at (0,-1) (y) {$y$} edge (w);
		\node[circle,fill=black,label=right:$4$] at (1,-1) (z) {\color{white} $z$} edge (x) edge (y);
	\end{tikzpicture}
	\caption{An example input instance for \GerryJ{} with two colors~(black and white) where black is the target color and where the numbers next to the vertices illustrate the vertex weights. For~$k=2$, a solution for this instance is~$\mathcal{V} = \{\{u,v,w\},\{x,y,z\}\}$ as each of these two parts induces a uniquely black-colored connected subgraph.}
	\label{fig:gerrymanderingExample}
\end{figure}
We also use an equivalent interpretation of 
solution partitions~$\mathcal V$ for \GerryJ{}.
Since each part~$V_i \in \mathcal{V}$ has to induce a connected subgraph, in the spirit of edge deletion problems from algorithmic graph theory, we also represent solutions by a set of edges such that removing these yields the disjoint union of subgraphs induced by each part~$V_i \in \mathcal{V}$.
In \cref{fig:gerrymanderingExample}, removing the edges~$\{w,x\}$ and~$\{w,y\}$ yields a solution.

Finally, regarding notation, for a color~$q$ we use~${w_q(v) \defeq w(v)}$ if~$v$ is of color~$q$ and~${w_q(v) = 0}$ if~$v$ has another color.
Further, we use~${w(V') \defeq \sum_{v \in V'} w(v)}$ and~${w_q(V') \defeq \sum_{v \in V'} w_q(v)}$.

\begin{table}[t]
	\centering
	\caption{Results overview. The diameter of a graph is denoted by~$\text{diam}$.}
	\label{tab-res}
	\begin{tabular}{clll}
		\toprule
		\ \ \ \ \ \ \ \ \ \ \ \ \ & Restriction\ \ \ \ & Complexity\ \ \ \ \ \ \ \ \ \ \ \ \ \ \ \ \ \ \ \ & Reference \ \ \ \ \ \ \ \ \ \ \\
		\midrule
		Paths & no & NP-hard & \Cref{thm:pathnphard} \\
		& constant $|C|$ & polynomial time & \cite[Theorem 4.4]{IKK19} \\
		\midrule
		Trees & constant $|C|$ & pseudo-polynomial time & \cite[Theorem 4.5]{IKK19} \\
		& $|C| = 2$ & polynomial time & \Cref{thm:treeP} \\
		& $|C| \ge 3$ & weakly NP-hard & \Cref{TreeNPhard} \\ [0.2ex]
		& $\text{diam} = 2$ & polynomial time & \cite[Theorem 4.2]{IKK19} \\
		& $\text{diam} = 3$ & polynomial time & \Cref{thm:diam3} \\
		& $\text{diam} \geq 4$ & NP-hard & \cite[Theorem 3.3]{IKK19} \\
		\bottomrule
	\end{tabular}
\end{table}

\paragraph{Known and new results.}
As mentioned before, we essentially build our studies on the work of 
Ito et al.\;\cite{IKK19}, in particular studying exactly the same computational
problem. We only focus on the case of path and tree graphs as input, 
whereas they additionally studied cliques. For cliques, they showed 
NP-hardness already  for $k=2$ and two colors. On the positive side, for cliques 
they provided a pseudo-polynomial-time algorithm for~$k=2$ and and a polynomial-time 
algorithm for each fixed~$k\geq 3$.
Moving to paths and trees, besides some positive algorithmic and hardness results 
Ito et al.\;\cite{IKK19} particularly left three open problems: 
\begin{enumerate}
\item Existence of a polynomial-time algorithm for paths when~$|C|$ is part of the input.
\item Existence of a polynomial-time algorithm for trees when~$|C|$ is a constant. 
\item Existence of a polynomial-time algorithm for trees of diameter exactly three.
\end{enumerate}
Indeed, they called the first two questions the ``main open problems'' of their paper.
We settle, all three questions, the first two in the negative by showing NP-hardness.
See \cref{tab-res} for an overview on some old and our new results.
Notably, our new results (partially together with the 
previous results of Ito et al.\;\cite{IKK19}) reveal two sharp complexity dichotomies for trees.  
For up to two colors, the problem is polynomial-time solvable, whereas it gets 
NP-hard with three or more colors; moreover, it is polynomial-time solvable for 
trees with diameter at most three but NP-hard for trees with diameter at least four.
In the remainder of this work, we first present our results for paths, 
and then for trees.

\section{NP-hardness on paths}

Ito et al.\;\cite{IKK19} showed that \GerryJ{} on paths can be solved in polynomial time for fixed $|C|$, and left open the question of polynomial-time solvability on paths when $|C|$ is unbounded.
Negatively answering their question, we show that \GerryJ{} remains NP-hard on paths even if every vertex has unit weight.

\begin{theorem}
	\label{thm:pathnphard}
	\GerryJ{} restricted to paths is NP-hard even if all vertices have unit weight.
\end{theorem}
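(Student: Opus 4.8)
The plan is to give a polynomial-time reduction from a standard NP-complete problem with a ``pick a configuration, then verify global constraints'' shape. $3$-SAT is a natural source (one could equally well use \textsc{Exact Cover by 3-Sets}), and since neither problem involves numbers, the restriction to unit weights comes for free. The reduction builds a single path $P$ that is the concatenation of constant-size \emph{choice gadgets}, one per variable, together with a run of \emph{padding gadgets}, and it picks $k$ so that $k-1$ equals exactly the number of cut edges used when every gadget is partitioned in its ``intended'' way. On a path, a partition into $k$ connected parts is just a choice of $k-1$ edges to delete, so the whole instance reduces to deciding where to cut.

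For the colors I would take the target color $p$, one \emph{constraint color} $\gamma_\phi$ for each clause $\phi$, and a handful of auxiliary ``junk'' colors used only for padding. Each choice gadget admits exactly two intended cut patterns, encoding the two truth values of its variable; the gadget is laid out so that, in the pattern encoding value $\beta$, a piece won by $\gamma_\phi$ is created precisely for those clauses $\phi$ in which the corresponding literal of this variable is falsified by $\beta$. Because a fixed literal of a clause is false under exactly one of the two assignments, and because each clause touches only three gadgets, $\gamma_\phi$ can gain at most one winning piece per clause-gadget and at most three in total. The padding gadgets are then tuned so that, over any intended partition, (i)~$p$ wins a fixed number $T$ of pieces, and (ii)~$\gamma_\phi$ wins exactly $T-3$ pieces plus one for each falsified literal of $\phi$ (and the junk colors stay well below $T$). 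Thus ``$p$ beats every other color'' holds iff every clause has at most two false literals, i.e.\ iff the formula is satisfiable; the forward direction of the equivalence is then immediate by translating a satisfying assignment into the corresponding intended partition and counting.

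The hard part --- where essentially all the effort goes --- is the converse: arguing that \emph{every} partition into $k$ connected parts in which $p$ wins can, without loss of generality, be taken to be intended on every gadget. The danger is ``budget-neutral cheating'': the solver could split a run of $p$-vertices to manufacture an extra $p$-win and, simultaneously, merge a $\gamma_\phi$-piece into a neighboring $p$-piece to destroy one of $\gamma_\phi$'s wins, an exchange that alters neither the number of parts nor the number of cuts and only helps feasibility. I would block this by designing the gadgets so that any such merge or split produces a piece that is won by the ``wrong'' constraint color or is tied among several colors (hence counted for all of them), and by keeping the overall cut budget so tight that no gadget can borrow a cut from another without leaving a detectable deficit. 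Turning this normalization into a rigorous statement requires a somewhat delicate case analysis of the ways a single gadget, and a pair of consecutive gadgets, can be cut; this rigidity argument is the main obstacle, while the rest is bookkeeping on how many pieces each color wins.
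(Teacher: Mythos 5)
Your proposal is a reduction template rather than a proof: the decisive step is missing. You correctly identify that on a path a $k$-partition is just a choice of $k-1$ cut edges, and your high-level plan (one constraint color $\gamma_\phi$ per clause, choice gadgets with two intended cut patterns, padding tuned so that $p$ wins $T$ pieces and $\gamma_\phi$ wins $T-3$ plus one per falsified literal) is a plausible architecture. But you never construct the gadgets, never specify the padding, and---most importantly---you explicitly defer the converse direction (``turning this normalization into a rigorous statement requires a somewhat delicate case analysis \dots this rigidity argument is the main obstacle''). That rigidity argument \emph{is} the proof. Because $k$ is a single global budget, an adversarial partition can borrow a cut in one gadget and repay it in a distant one, and asserting that your gadgets ``can be designed'' to detect every such exchange is precisely the claim that needs to be established. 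As written, the forward direction is fine but the backward direction is an unproven conjecture about an unspecified construction, so the argument does not go through.

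For comparison, the paper takes a different route that sidesteps local gadget rigidity entirely. It reduces from \textsc{Clique} on $d$-regular graphs: a long path $P_v$ (length $\Theta(N)$ with $N = 4n^2$) per vertex, a $4$-vertex path per edge, a pool of isolated $p$- and $q$-colored vertices, a fresh color $\colo{v}$ per vertex of $G$, and unique throwaway colors on all padding vertices (so that splitting padding never helps any color). The converse is then handled by a single \emph{global} counting argument: one bounds $|E''|$ by a function $f(|J|)$ of the number of ``cut'' vertex-gadgets, observes that $f$ is monotone and that the budget forces $f(|J|) = f(n-\ell)$, and unwinds the resulting chain of tight inequalities to conclude that the uncut gadgets form a clique. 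If you want to salvage your 3-SAT approach, you would need an analogous global accounting (or a genuinely exhaustive case analysis of cut patterns across consecutive gadgets); the per-gadget normalization you sketch is exactly the part that is hard to make airtight on a path, where parts can straddle gadget boundaries arbitrarily.
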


\begin{proof}
	We reduce from \textsc{Clique} on regular graph, which is NP-hard~\cite{MS12}.
	Let~$(G, \ell)$ be an instance of \textsc{Clique}, where~$G$ is~$d$-regular for some integer $d$, and~$\ell$ is the sought solution size.
	The main idea is to first construct an equivalent instance of \GerryJ{} where the graph consists of disjoint paths.
	Afterwards, we slightly modify the reduction to obtain one connected path.

	All vertices in the following constructions have weight one.
	Let~$n$ and $m$ be the number of vertices and edges in $G$, respectively, and let $N \defeq 4n^2$. 
	We introduce a path $P_v$ on $4N - 1$ vertices for each vertex~$v \in V$ and a path~$P_e'$ on four vertices for each edge~$e \in E$.
	Moreover, we introduce an independent set $S$ of $2N - (n - \ell) + 1$ vertices.
	We denote by~$G'=(V',E')$ the disjoint union of all~$P_v$ for~$v \in V$, all~$P_e'$ for~$e \in E$, and~$S$.
	Note that~$G'$ has~$z \defeq 2N + \ell + m + 1$ connected components.

	We introduce colors $p, q, r$, and a unique color $\colo{v}$ for each~$v \in V$, where~$p$ is the target color.
	We color~$N + 1$ vertices of $S$ with color~$p$ and $N - (n - \ell)$ vertices of $S$ with color~$q$.
	For each vertex $v \in V$, we color the vertices in $P_v$ as follows.
	\begin{itemize}
		\item The first $N - 1$ vertices receive color~$q$,
		\item for each~$i \in [N]$, the~$(N - 1 + 3i)$-th vertex receives color~$\colo{v}$, and
		\item each remaining vertex receives a new color (which is distinct for each vertex).
	\end{itemize}
	An illustration of the path $P_v$ is shown in \cref{fig:vertexselectiongadget}.
	For each edge~$e = \{u,v\} \in E$, we color the two inner vertices of $P'_e$ with color~$r$ and the endpoints with colors~$\colo{u}$ and $\colo{v}$, respectively.
	Finally, we set~$k \defeq (n - \ell) \cdot 3N + d \ell + \binom{\ell}{2} + z$.

	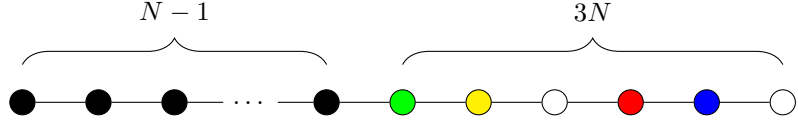
\begin{figure}
		\centering
		\begin{tikzpicture}
			\node[circle, fill=black, draw] at (0,0) (q1) {};
			\node[circle, fill=black, draw] at (1,0) (q2) {} edge (q1);
			\node[circle, fill=black, draw] at (2,0) (q3) {} edge (q2);
			\node at(3,0) (qdots) {$\dots$} edge (q3);
			\node[circle, fill=black, draw] at (4,0) (q4) {} edge (qdots);
			\node[circle, fill=green, draw] at (5,0) (v11) {} edge (q4);
			\node[circle, fill=yellow, draw] at (6,0) (v12) {} edge (v11);
			\node[circle, fill=white, draw] at (7,0) (v13) {} edge (v12);
			\node[circle, fill=red, draw] at (8,0) (v21) {} edge (v13);
			\node[circle, fill=blue, draw] at (9,0) (v22) {} edge (v21);
			\node[circle, fill=white, draw] at (10,0) (v23) {} edge (v22);
						
			\draw [decorate,decoration={brace,amplitude=10pt},yshift=0pt] (0,.5) -- (4,.5);
			\node at(2,1.2) {$N-1$};
			\draw [decorate,decoration={brace,amplitude=10pt},yshift=0pt] (5,.5) -- (10,.5);
			\node at(7.5,1.2) {$3N$};
		\end{tikzpicture}
		\caption{An example of a gadget for a vertex~$v$. Black and white vertices represent~$q$-colored and~$\colo{v}$-colored vertices, respectively, and each other vertex has a distinct color.}
		\label{fig:vertexselectiongadget}
	\end{figure}
	
	First, we show that if $G$ contains a clique $K$ of size~$\ell$, then the constructed instance of \GerryJ{} is a yes-instance.
	We will specify the set~$E''$ of exactly~$k - z$ edges such that the connected components of~${G'' = (V', E' \setminus E'')}$ correspond to a solution.
	Note that each removal of an edge increases the number of connected components by exactly one.
	\begin{itemize}
		\item
			For each vertex~$v \in V \setminus K$, the edge set~$E''$ contains all $3N$ edges in $P_v$ that are not between two~$q$-colored vertices.
			There are $(n - \ell) \cdot 3N$ such edges.
		\item
			For each vertex~$v \in K$ and each edge~$e = \{u,v\}$, the edge set~$E''$ contains the edge incident to the~$\colo{v}$-colored vertex in $P'_e$.
			There are~$d \ell$ such edges as each vertex in the input graph has~$d$ neighbors.
		\item
			For each edge~$e$ where both endpoints are contained in~$K$, the edge set~$E''$ contains the edge between the two inner ($r$-colored) vertices in $P'_e$.
			There are~$\binom{\ell}{2}$ such edges.
	\end{itemize}
	Thus, $E''$ contains $(n-\ell)\cdot 3N + d \ell + \binom{\ell}{2} = k - z$ edges in total, leaving~$k$ connected components in the graph $G''$.

	Now we examine the color of each connected component of $G''$.
	First, note that there are~$N+1$ connected components that are uniquely~$p$-colored.
	We now show that for each color~$c$ other than~$p$ there are at most~$N$ connected components which are~$c$-colored.
	\begin{itemize}
		\item
			For color $q$, observe that there are~$N-(n-\ell)$ isolated vertices of color~$q$ in~$S$ and for each vertex~${v \in V \setminus K}$ there is exactly one~$q$-colored connected component contained in $P_v$ and for every vertex~$v \in K$ there is no~$q$-colored connected component in~$P_v$.
			Hence, there are~$N - (n-\ell) + (n-\ell) = N$ connected components that are~$q$-colored.
		\item
			For color $r$, note that there are $2m < N$ vertices which are~$r$-colored. Thus, there are less than~$N$ connected components that are~$r$-colored.
		\item 
			For each color~$\colo{v}$ with $v \in V \setminus K$, there are~$N$ connected components in~$P_v$ that are~$\colo{v}$-colored.
			All other~\mbox{$\colo{v}$-colored} vertices are contained in $P'_e$ for some~${e \in E}$ and those belong to $r$-colored component by construction.
			Hence, there are~$N$ connected components that are~$\colo{v}$-colored.
		\item
			For each color $\colo{v}$ with $v \in K$, the whole path~$P_v$ remains one connected component which is~$\colo{v}$-colored.
			All other~$\colo{v}$-colored vertices are contained in~$P_e'$ for some $e \in E$ and since~${N > m}$, there are at most~$N$ connected components that are~$\colo{v}$-colored.
	\end{itemize}
	Thus, if~$G$ contains a clique of size~$\ell$, then the constructed instance is a yes-instance.

	Conversely, we show that if the constructed instance of \GerryJ{} has a solution $\mathcal{V}$, then there is a clique of size~$\ell$ in $G$.
	Let~$E''$ be a set of exactly~$k - z$ edges in~$G'$ such that the connected components of~${G'' = (V', E' \setminus E'')}$ correspond to~$\mathcal{V}$.
	Let $J$ be the set of vertices~$v \in V$ such that $P_v$ contains an edge of~$E''$ and let~$K \defeq V \setminus J$.
	For each vertex~$v \in J$, let $n_v^q$ and $n_v^c$ be the number of connected components of $P_v - E''$ which are~\mbox{$q$-colored} and~\mbox{$v_c$-colored}, respectively.
	Our goal is to show that $K$ forms a clique of size~$\ell$ in~$G$.
	To this end, we derive an upper bound on the size of $E''$ in terms of~$n_v^q$,~$n_v^c$, and~$|J|$:
	\begin{enumerate}
		\item
			For each vertex~$v \in J$, there are at most~${n_v^q - 1}$~edges in~$P_v$ whose endpoints are~$q$-colored.
			Since there are~$N + 1$ isolated~$p$-colored vertices and~${N - (n - \ell)}$ isolated~$q$-colored vertices in~$S$, it follows that~${\sum_{v \in J} n_v^q \le n - \ell}$.
			Thus, $E''$ contains at most $\sum_{v \in J} n_v^q - 1= n - \ell - |J|$ edges in $P_v$ both of whose endpoints are $q$-colored.
		\item
			For each vertex~$v \in J$, the edge set~$E''$ contains at most~$3 n_v^c$ edges in~$P_v$ where at least one endpoint is not~\mbox{$q$-colored}.
		\item
			For each vertex~$v \in J$, the edge set~$E''$ contains at most~$N - n_v^c$ edges incident to a~$c_v$-colored endpoint in a~$P'_e$ for some edge~$e \in E$.
		\item
			For each vertex~$v \in K$, there are exactly~$d$ edges incident to a~$c_v$-colored endpoint that are contained in a~$P'_e$ for some edge~$e \in E$.
			Thus, $E''$ contains at most~${d \cdot |K| = d \cdot (n - |J|)}$ such edges.
		\item
			Finally, we consider edges between inner vertices of $P'_e$ for $e \in E$.
			Observe that if such an edge $e = \{ u, v \}$ is in $E''$, then $G''$ has one $\colo{u}$-colored component and one $\colo{v}$-colored component.
			Thus, $|E''|$ contains at most $${\binom{|K|}{2} + \left( \sum_{v \in J} N - n_v^c \right) = \binom{n - |J|}{2} + \sum_{v \in J} N - n_v^c}$$ such edges.
	\end{enumerate}
	Summing over these edges yields that $E''$ contains at most
	\begin{align*}
		& (n - \ell - |J|) + \left( \sum_{v \in J} 3 n_v^c \right) + \left( \sum_{v \in J} N - n_v^c \right) \nonumber\\
		&+\ d\cdot (n - |J|) + \binom{n - |J|}{2} + \left( \sum_{v \in J} N - n_v^c \right) \nonumber \\
		&\le (n - \ell - |J|) + 3N \cdot |J| \nonumber
		+\ d \cdot (n - |J|) + \binom{n - |J|}{2} \nonumber
	\end{align*}
	edges.
	Here, the inequality is due to the fact that~$n_v^c \le N$.
	Thus,~${|E''| \le f(x)}$, where $$f(x) \defeq (n - \ell - x) + 3N \cdot x + d \cdot (n - x) + \binom{n - x}{2}.$$

	Next we show that~$|J| \le n - \ell$.
	Recall that $G$ has $N + 1$ isolated $p$-colored vertices and $N - (n - \ell)$ isolated $q$-colored vertices.
	Since the path~$P_v$ contains at least one~$q$-colored part for every vertex~$v \in J$, we obtain $|J| \le n - \ell$.

	Notice that $f(x)$ is monotonically increasing for $x \ge 0$ and that from this follows that~$k - z = |E''| \le f(|J|) \le f(n - \ell)$.
	Note that~${f(n - \ell) = k - z}$ by the definition of $f$.
	Consequently, we have~${f(|J|) = f(n - \ell)}$ and hence $|J| = n - \ell$.
	Finally, note that for any solution where~$|J| = n - \ell$, we cannot remove any edges between two~$q$-colored vertices (as this would result in at least~$N+1$ connected components that are~$q$-colored).
	Hence,~$n_v^q = 1$ for each~$v \in V$ and thus summing up all edges in~$E''$ without the edges between two~$q$-colored vertices yields
	$$|E''| \leq \left( \sum_{v \in J} 2N + n_v^c \right) + d \ell + \binom{\ell}{2}.$$
	For~$E''$ to contain~$k-z = (n  - \ell) \cdot 3N + d \ell + \binom{\ell}{2}$ edges, it has to also hold that~$n_v^c = N$ for each vertex~$v \in J$.
	Hence, there are exactly~$\binom{\ell}{2}$ edges in~$E''$ between two~$r$-colored vertices in~$P'_e$ for edges~$e \in E$.
	Note that for each such edge~$e \in E$ it has to hold that both endpoints of~$e$ are in~$K$ as otherwise there are~$N+1$ connected components in~$G''$ of color~$\colo{v}$ (where~$v \in J$ is an endpoint of~$e$).
	Thus, there are~$\ell$ vertices in~$K$ that share~$\binom{\ell}{2}$ edges between them, that is,~$K$ induces a clique of size~$\ell$.

	We next show how to connect the different paths of the construction to obtain a single connected path.
	For~${M \defeq 4Nn + 3m}$, we simply add a path of~$M$ vertices between each connected component in the previous reduction (that results in multiple disconnected paths) where each vertex has a unique color.
	Note that there are exactly~$z-1$ such paths and thus in total~$(z-1) \cdot (M+1)$ new edges.
	Finally, we set~$k \defeq (n-\ell)\cdot 3N + \ell \cdot d + \binom{\ell}{2} + (z-1) \cdot (M+1)$.
	The correctness of this adaption is straight-forward:
	If there is a solution for the instance consisting of multiple paths, then removing the newly introduced edges clearly gives a solution for the new instance consisting of a single path.
	If there is no solution for the instance consisting of multiple paths, then note that since~$M$ is larger than the number of edges in the original construction and~$k > (z-1) \cdot (M+1)$, at least one edge from each newly introduced path is removed.
	Hence, vertices that are in different connected components in the original construction are also in different connected components in any solution.
	Moreover, since all newly introduced vertices have unique colors and all vertices have the same weight, any color of a connected component in a solution for the instance consisting of multiple paths also has the same color in the newly constructed instance.
\end{proof}

In the above reduction,
we use an unbounded number of colors. 
This appears to be inevitable since \GerryJ{} is polynomial-time solvable for any constant $|C|$.
We wonder whether there are other graph classes for which \GerryJ{} can be solved in polynomial time when $|C|$ is constant.
Caterpillars form a possible candidate.

\section{Complexity on trees}
In this section, we first address the special case of three colors (NP-hard), 
then two colors (polynomial-time solvable), 
and finally we discuss the polynomial-time solvability for diameter-three trees.

Ito et al.\;\cite{IKK19} developed a pseudo-polynomial time algorithm for \GerryJ{} on trees for constant~$|C|$, which led them to ask whether it is also polynomial-time solvable for fixed~$|C|$.
We show that \GerryJ{} on trees is weakly NP-hard even if~$|C| = 3$, answering their question in the negative.
In the following subsection, we will then show the polynomial-time solvability for $|C| = 2$. So we have a tight classification.

\begin{theorem}\label{TreeNPhard}
	\GerryJ{} restricted to trees is weakly NP-hard even if $|C| = 3$.
\end{theorem}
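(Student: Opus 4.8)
The plan is to reduce from a cardinality‑constrained variant of \textsc{Subset Sum} — given positive integers $a_1,\dots,a_n$ and integers $c$ and $T$, is there an $S\subseteq[n]$ with $|S|=c$ and $\sum_{i\in S}a_i=T$? — equivalently from \textsc{Equal‑Cardinality Partition}. This problem is weakly NP‑hard: it admits a pseudo‑polynomial dynamic program over pairs (cardinality so far, partial sum), and it is NP‑hard by a routine reduction from \textsc{Subset Sum}/\textsc{Partition}. A pseudo‑polynomial reduction is exactly what one should aim for here, since Ito et al.\;\cite{IKK19} give a pseudo‑polynomial‑time algorithm for constant $|C|$ on trees, so only weak hardness is possible.

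The tree I would build has a single ``collector'' vertex $\rho$, colored with the target color $p$ and given a carefully chosen weight; $\rho$ is incident to $q$‑ or $r$‑colored material only through the element gadgets, so the $q$‑weight and $r$‑weight of $\rho$'s district $\Pi$ are entirely accumulated from the elements. For each $i\in[n]$ I attach to $\rho$ a small ``claw'': $\rho-g_i$ with $g_i$ further adjacent to a $q$‑colored vertex and an $r$‑colored vertex, both of weight $\Theta(a_i)$. Cutting one of the two inner claw edges pushes the $r$‑vertex (with $g_i$) into $\Pi$, contributing $\Theta(a_i)$ of $r$‑weight to $\Pi$ (``$i$ selected'') and leaving the $q$‑vertex as its own singleton district; cutting the other inner edge symmetrically contributes $\Theta(a_i)$ of $q$‑weight to $\Pi$ (``$i$ not selected'') and leaves the $r$‑vertex as a singleton. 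Placing the claws so their leaves sit at distance two from $\rho$ makes the whole tree have diameter exactly $4$, matching the diameter‑$3$‑versus‑$4$ boundary (\Cref{thm:diam3}).

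The crucial observation is that, with three colors, the event ``$\Pi$ is uniquely $p$‑colored'' is a \emph{two‑sided} condition: the fixed $p$‑weight of $\rho$ must strictly exceed both the $q$‑weight of $\Pi$, which equals a constant plus $\Theta(\sum_{i\notin S}a_i)$, and the $r$‑weight of $\Pi$, which equals a constant plus $\Theta(\sum_{i\in S}a_i)$. After a harmless scaling of the $a_i$ and a suitable choice of $w(\rho)$, these two strict inequalities become exactly $\sum_{i\in S}a_i\ge T$ and $\sum_{i\in S}a_i\le T$, i.e.\ $\sum_{i\in S}a_i=T$. This is how one collector encodes an \emph{exact}‑sum constraint even though a tree cannot ``route'' each element's weight to two far‑apart districts; it is also why two colors do not suffice — there a district is won by $p$ as soon as a single inequality holds (cf.\ \Cref{thm:treeP}). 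Finally I add a rigid ``ballast'' of pendants and pick $k$ so tightly that every partition contains exactly $c$ singleton $q$‑districts and $n-c$ singleton $r$‑districts from the claws (here the cardinality constraint $|S|=c$ is used, so these two counts coincide when it is met) plus a fixed number of $p$‑districts, arranged so that $p$ wins exactly one district more than $q$ and than $r$ precisely when $\Pi$ is uniquely $p$‑colored. Hence the constructed instance is a yes‑instance iff $\Pi$ can be made uniquely $p$‑colored iff there is an $S$ of size $c$ with $\sum_{i\in S}a_i=T$.

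The main obstacle is rigidity. One must rule out that some non‑canonical partition — detaching an entire claw from $\rho$, leaving a claw uncut, splitting or merging ballast pendants, or shifting the $k-1$ removed edges between claws — yields a solution for a no‑instance. This needs a case analysis of the constantly many ways each gadget can meet the set of removed edges, together with a global counting argument that pins down the number of districts of each color in \emph{every} partition, not just the intended one; in particular one must verify that any deviation from the canonical partition strictly loses a $p$‑district or strictly gains a $q$‑ or $r$‑district. Choosing the weights, the ballast, and $k$ so that all of this goes through simultaneously is the delicate part of the argument.
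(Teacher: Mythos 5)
Your high-level strategy is the same as the paper's: reduce from an exact-sum problem (the paper uses \textsc{Partition} with the equal-cardinality requirement $|B|=n/2$), attach one gadget per element to a single heavy $p$-colored collector vertex, add $n/2$ $p$-colored pendant leaves as ballast, choose $k=\Theta(n)$ so tightly that the $p$-, $q$- and $r$-part counts are pinned, and exploit the fact that with three colors ``uniquely $p$-colored'' imposes the two strict inequalities $w_q(\Pi)<w_p(\Pi)$ and $w_r(\Pi)<w_p(\Pi)$, which together force $\sum_{i\in S}a_i=s/2$. All of that matches \Cref{TreeNPhard} in the paper. The difference is the element gadget and, crucially, that you have not carried out the rigidity argument, which you yourself identify as ``the delicate part.'' That part is not routine for the gadget you propose, and in at least one natural instantiation it fails.

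Concretely: in your claw $\rho - g_i - \{u_i^q,u_i^r\}$, cutting the edge $\{\rho,g_i\}$ removes element $i$'s contribution from \emph{both} the $q$-side and the $r$-side of $\Pi$ at the cost of a single new part whose color is fixed by the construction (whichever of $w(u_i^q),w(u_i^r)$ is larger). Since smaller $w_q(\Pi)$ and $w_r(\Pi)$ only ever help the adversary, and the part-count budget is paid exactly once, this is a genuine cheat: with asymmetric leaf weights one can detach a few claws, collapse the two-sided constraint to a one-sided one, and satisfy a no-instance. Whether this is blocked depends on unspecified choices (the color and weight of $g_i$, whether the two leaves of a claw tie — a tied detached claw counts as both $q$- and $r$-colored, which busts the count budget, while an untied one does not), and you also need to exclude double-cut claws compensated by uncut claws, which make one element contribute to both color weights and another to neither. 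The paper sidesteps all of this differently: each element gets \emph{two} two-vertex paths $X_i$ ($q$-colored) and $Y_i$ ($r$-colored) hanging off the center, so that ``not selected'' costs one $q$-part \emph{and} one $r$-part; the weights carry terms $\pm N\cdot 2^i$ that force $X_i$ and $Y_i$ to be attached or detached together ($I_x=I_y$); and a size-counting argument ($|V_z|\le 2n+1$ from the weight bounds, total vertex count $9n/2+1$) pins every non-center, non-leaf part to be exactly some $X_i$ or $Y_i$. Without an analogue of these mechanisms, your reduction's backward direction is unproven, so the proposal as written has a real gap rather than being a complete alternative proof.
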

\begin{proof}
	We reduce from \textsc{Partition}, which is known to be NP-hard \cite{GJ79}.
	Given a multi-set~$A$ of~$n$ non-negative integers, the task is to find a subset~$B \subseteq A$ of exactly $n / 2$ integers whose sum is $s/2$, where~$s \defeq \sum_{a\in A} a$.
	We can assume that~$s$ is a multiple of $n$ (otherwise we multiply each element of $A$ by $n$).
	Let $N \defeq s + 1$ and let $M$ be some natural number greater than $N \cdot 2^{n} (n + 1) + s / 2 + 1$.
	For the construction, we use a set~$C=\{p,q,r\}$ of three colors, where~$p$ is the target color.
	We start with a star with a center vertex~$z$ and a set~$L$ of $n/2$~leaves.
	We color every vertex in the star with color~$p$.
	We assign the weights~$w(z) \defeq Mn + s / 2 + 1$ to the center $z$ and~$w(\ell) \defeq1$ for each leaf $\ell \in L$.
	For each~$i \in [n]$, we do the following.
	\begin{itemize}
		\item
			We introduce two vertices~$x_{i}^{q}$ and~$y_{i}^{q}$ of color~$q$ and two vertices~$x_{i}^{r}$ and~$y_{i}^{r}$ of color~$r$.
			Let~$X_i \defeq \{ x_i^q, x_i^r \}$,~${Y_i \defeq \{ y_i^q, y_i^r \}}$, and~$Z_i\defeq X_i \cup Y_i$.
		\item
			We add four edges~$\{z,x_i^q\}$, $\{z,y_i^q\}$, $\{x_i^q,x_i^r\}$, and~$\{y_i^q,y_i^r\}$.
		\item
			We define the weights for each vertex in $Z_i$ as
			\begin{align*}
				w(x_i^q) &\defeq M + N \cdot 2^i + a_i,\\
				w(x_i^r) &\defeq M - N \cdot 2^i, \\
				w(y_i^r) &\defeq M + N \cdot 2^i - a_i + \frac{2s}{n}, \text{ and} \\
				w(y_i^q) &\defeq M - N \cdot 2^i.
			\end{align*}
			Observe that the weights are integral since~$s$ is divisible by~$n$.
			In addition, observe that~$X_i$ is~$q$-colored and that~$Y_i$~is \mbox{$r$-colored}.
	\end{itemize}
	Illustrating the constructed graph is depicted in \cref{fig:NPTree}.
	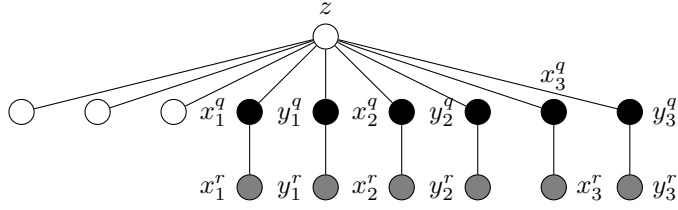
\begin{figure}[t]
		\centering
		\begin{tikzpicture}
			\node[circle, draw,label=$z$] at (0,1) (z) {};
			\node[circle, draw] at (-4,0) (l1) {} edge (z);
			\node[circle, draw] at (-3,0) (l2) {} edge (z);
			\node[circle, draw] at (-2,0) (l3) {} edge (z);
			\node[circle, draw, fill=black,label=left:$x_1^q$] at (-1,0) (x1) {} edge (z);
			\node[circle, draw, fill=black,label=left:$y_1^q$] at (0,0) (y1) {} edge (z);
			\node[circle, draw, fill=black,label=left:$x_2^q$] at (1,0) (x2) {} edge (z);
			\node[circle, draw, fill=black,label=left:$y_2^q$] at (2,0) (y2) {} edge (z);
			\node[circle, draw, fill=black,label=above:$x_3^q$] at (3,0) (x3) {} edge (z);
			\node[circle, draw, fill=black,label=right:$y_3^q$] at (4,0) (y3) {} edge (z);
			\node[circle, draw, fill=gray,label=left:$x_1^r$] at (-1,-1) () {} edge (x1);
			\node[circle, draw, fill=gray,label=left:$y_1^r$] at (0,-1) () {} edge (y1);
			\node[circle, draw, fill=gray,label=left:$x_2^r$] at (1,-1) () {} edge (x2);
			\node[circle, draw, fill=gray,label=left:$y_2^r$] at (2,-1) () {} edge (y2);
			\node[circle, draw, fill=gray,label=right:$x_3^r$] at (3,-1) () {} edge (x3);
			\node[circle, draw, fill=gray,label=right:$y_3^r$] at (4,-1) () {} edge (y3);
		\end{tikzpicture}
		\caption{An illustration of the construction in the proof of \cref{TreeNPhard}. White vertices represent~$p$-vertices, black vertices represent~$q$-colored vertices, and gray vertices represent~$r$-colored vertices.}
		\label{fig:NPTree}
	\end{figure}
	Clearly, the constructed graph $G = (V, E)$ is a tree.
	To conclude the construction of the \GerryJ{} instance, we set $k \defeq 3n / 2 + 1$.
	
	We next show that the construction is correct.
	Suppose that there is a subset~$I \subseteq [n]$ of size exactly $n / 2$ such that~$\sum_{i \in I} a_i = s / 2$.
	Then, the partition
	\begin{align*}
		\mathcal{V} = \{ V' \} \cup \{ \{ \ell \} \mid \ell \in L \}\} \cup \{ X_i \mid i \in [n] \setminus I\} \cup \{ Y_i \mid i \in [n] \setminus I \},
	\end{align*}
	where $V' \defeq \{\{ z \} \cup \{ v \mid \exists i \in I.\;v \in X_i \cup Y_i\}\}$ is a solution for the constructed instance of \GerryJ:
	First, observe that $V'$ is $p$-colored as~${w_p(V') = Mn + s / 2 + 1}$ and~${w_q(V') = w_r(V') = Mn + s / 2}$.
	We also observe that the singleton $\{ \ell \}$ is $p$-colored for each leaf $\ell \in L$, and hence~$\mathcal{V}$ has $n / 2 + 1$ subsets which are $p$-colored.
	Since $X_i$ is $q$-colored and $Y_i$ is $r$-colored for each~$i \in [n]$, exactly~$n - |I| = n / 2$ subsets of~$\mathcal{V}$ are~$q$-colored and exactly~$n - |I| = n / 2$ subsets of~$\mathcal{V}$ are~$r$-colored.
	Thus, $\mathcal{V}$ is indeed a solution.

	Conversely, suppose that there is a solution~$\mathcal{V}$.
	We show that the \textsc{Partition} instance is a yes-instance.
	Note that there are at least $k / |C| = (3n / 2 + 1) / 3 > n / 2$ parts in~$\mathcal{V}$ which are uniquely $p$-colored.
	Since there are exactly $n / 2 + 1$ vertices of color $p$, each vertex of color~$p$ is contained in a distinct part in~$\mathcal{V}$.
	In particular, this means that $\{ \ell \} \in \mathcal{V}$ for each leaf~$\ell \in L$.

	Let $V_z \in \mathcal{V}$ denote the subset containing the center $z$, and let~$n_q$ and $n_r$ denote the number of vertices of color~$q$ and~$r$ in~$V_z$, respectively.
	As each vertex of color~$q$ or~$r$ has weight at least~$M - N \cdot 2^n$, we have~${w_q(V_z) \ge (M - N \cdot 2^n) \cdot n_q}$ and~${w_r(V_z) \ge (M - N \cdot 2^n) \cdot n_r}$.
	Since~$V_z$ is uniquely~$p$-colored, we have
	\begin{align*}
		\max \{ w_q(V_z), w_r(V_z) \} < w_p(V_z) &= w(v) = Mn + s / 2 + 1 \text{ and}\\
		\max \{ n_q, n_r \} < \frac{Mn + s / 2 + 1}{M - N \cdot 2^n} &= n + \frac{N \cdot 2^n n + s / 2 + 1}{M - N \cdot 2^n} < n + 1.
	\end{align*}
	Here, the last inequality follows since ${M > N \cdot 2^n (n + 1) + s / 2 + 1}$.
	Thus,~$|V_z|$ contains at most $n_q + n_r + 1 \le 2n + 1$ vertices.

	Let $\mathcal{V}' \defeq \mathcal{V} \setminus (\{V_z\} \cup \{\{ \ell \} \mid \ell \in L \})$ be the collection of subsets of~$\mathcal{V}$ not containing any $p$-colored vertices.
	Notice that~$|\mathcal{V}| = k = 3n / 2 + 1$ and that~${|\mathcal{V'}| = |\mathcal{V}| - (n/2 + 1) = n}$.
	Now, consider some $V' \in \mathcal{V}'$.
	We have ${V' \subseteq X_i}$ or ${V' \subseteq Y_i}$ for some~$i \in [n]$ by construction.
	Since~$|X_i| = |Y_i| = 2$ for all~$i \in [n]$, we have $|V'| \le 2$ and thus~${\left| \bigcup_{V' \in \mathcal{V}'} V' \right| \le 2 n}$.
	Moreover, since there are~$n/2 + 1 + 4n = 9n/2 + 1$~vertices in~$G$, we have~${\left| \bigcup_{V' \in \mathcal{V}'} V' \right| = |V| - |V_z| - |L| \ge 2n}$.
	Hence,~$|\bigcup_{V' \in \mathcal{V}'} V'| = 2n$ and thus, for each part~$V' \in \mathcal{V'}$, it holds that~${|V'| = 2}$ yielding~$V' = X_i$ or~$V' = Y_i$ for some~${i \in [n]}$.
	Let~${I_x \defeq \{i \in [n] \mid X_i \in \mathcal{V'} \}}$ and~${I_y \defeq \{ i \in [n] \mid Y_i \in \mathcal{V'} \}}$.
	Since all~$X_i$ are $q$-colored and all~$Y_i$ are $r$-colored, we have~${|I_x| \le n / 2}$ and~$|I_y| \le n / 2$.
	Then, since~$|I_x| + |I_y| = |\mathcal{V}| = n$, we obtain~${|I_x| = |I_y| = n / 2}$.
	
	The total weights of vertices of color $q$ and $r$ in $V_z$ are
	\begin{align}
		w_q(V_z) &= \sum_{i \in I_x} w(x_i^q) + \sum_{i \in I_y} w(y_i^q) = Mn + \sum_{i \in I_x} a_i + N \left( \sum_{i \in I_x} 2^i- \sum_{i \in I_y} 2^i \right) \label{eq:wq} \text{ and }
	\end{align}
	\begin{align}
		w_r(V_z) &= \sum_{i \in I_x} w(x_i^r) + \sum_{i \in I_y} w(y_i^r) = Mn + s - \sum_{i \in I_y} a_i + N \left( \sum_{i \in I_y} 2^i - \sum_{i \in I_x} 2^i \right), \label{eq:wr}
	\end{align}
	respectively.
	Now, assume for the sake of contradiction that~$I_x \ne I_y$.
	Then, there exists an index~${i_{\max} \defeq \max \{ (I_x \setminus I_y) \cup (I_y \setminus I_x) \}}$.
	If~$i_{\max} \in I_x$, then each element in $I_y \setminus I_x$ is smaller than~$i_{\max}$, and hence
	\begin{align}
		\sum_{i \in I_x} 2^i - \sum_{i \in I_y} 2^i 
		= \sum_{i \in I_x \setminus I_y} 2^i - \sum_{i \in I_y \setminus I_x} 2^i
		\ge 2^{i_{\max}} - \sum_{i \in [i_{\max} - 1]} 2^i = 2. \label{eq:sumi}
	\end{align}
	Combining \cref{eq:wq,eq:sumi} yields $${w_q(V_z) \ge Mn + \sum_{i \in I_x} a_i + 2N \ge Mn + 2N > w_p(V_z)},$$ which is a contradiction to~$V_z$ being uniquely~$p$-colored.
	We analogously obtain a contradiction for $i_{\max} \in I_y$ and thus it holds that~$I_x = I_y$.
	Observe that for~$I \defeq I_x = I_y$ \cref{eq:wq} implies~$w_q(V_z) = Mn + \sum_{i \in I} a_i$ and \cref{eq:wr} implies~$w_r(V_z) = Mn + s - \sum_{i \in I} a_i$.
	
	Since~$w_q(V_z) < w_p(V_z)$ and~$w_r(V_z) < w_p(V_z)$, we obtain $$w_q(V_z) = w_r(V_z) = Mn + s / 2$$ and thus~${\sum_{i \in I} a_i = s / 2}$.
	Consequently, $I$~is a solution to the original instance of \textsc{Partition}.
\end{proof}

We continue with a complexity analysis for the case~${|C|=2}$. 
Note that \GerryJ{} on trees is pseudo-polynomial-time solvable for any constant $|C|$ (and thereby for~$|C| = 2$) \cite{IKK19}.
To complement this result and also \cref{TreeNPhard}, we next show that for~$|C|=2$ there is a polynomial-time algorithm for trees, adapting a pseudo-polynomial-time algorithm of Ito et al.\;\cite[Theorem~4.5]{IKK19}.
We thus obtained a dichotomy with respect to~$|C|$. 
The key difference is that we only store the maximum winning margin of the target color over the other color.

\begin{restatable}{proposition}{ceqtwo}
	\label{thm:treeP}
	For ${|C| = 2}$, {\GerryJ{}} restricted to trees can be solved in $O(n^3)$~time. 
\end{restatable}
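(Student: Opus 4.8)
The plan is to solve the problem by a dynamic program over the tree, adapting the pseudo-polynomial-time algorithm of Ito et al.\ and exploiting that with only two colors the relevant state of a part collapses to a single integer. Write $C = \{p, q\}$ with $p$ the target color, and call $w_p(V_i) - w_q(V_i)$ the \emph{margin} of a part $V_i$. Since $|C| = 2$, a part is uniquely $p$-colored exactly when its margin is positive and is $q$-colored exactly when its margin is at most $0$; hence in any partition into exactly $k$ connected parts the number of uniquely $p$-colored parts plus the number of $q$-colored parts is $k$, so the instance is a yes-instance if and only if some partition into $k$ connected parts has at least $\lfloor k/2 \rfloor + 1$ parts of positive margin. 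Root the tree at an arbitrary vertex. For a vertex $v$ and integers $i \ge 1$ and $a \ge 0$, I would set $D[v][i][a]$ to be the maximum possible margin of the part containing $v$ (the \emph{open part}) over all partitions of the subtree $T_v$ into exactly $i$ connected parts in which at least $a$ of the $i - 1$ parts \emph{not} containing $v$ have positive margin, and $D[v][i][a] = -\infty$ if no such partition exists. This is the only place where $|C| = 2$ is used: the entire relevant state of the open part is the one integer $w_p - w_q$, which we store as the table value rather than index by (which is what forces pseudo-polynomiality for larger $|C|$).

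For a leaf $v$ we put $D[v][1][0] = w_p(v) - w_q(v)$ and everything else $-\infty$. For an internal vertex, process its children $c_1, \dots, c_t$ one at a time, maintaining the table for the partial subtree consisting of $v$ together with the already-processed children's subtrees. When a child $c$ is incorporated, the edge $\{v,c\}$ is either \emph{cut} --- then the open part of $T_c$ becomes a finished part, contributing $1$ to the winner bound iff its stored margin is positive, the part-counts add, and the margin of the open part at $v$ is unchanged --- or \emph{kept} --- then the open parts of $v$ and $c$ fuse into one, so their part-counts add minus one, their winner bounds add, and their margins add. Taking, for each resulting index triple, the maximum over both options and over all ways of splitting the counts (and the winner bounds) yields the updated table, which is also monotone in $a$. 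Finally, at the root $r$ the open part is necessarily finished, so the instance is a yes-instance iff there is an $a$ with $D[r][k][a] \ne -\infty$ and $a + [\,D[r][k][a] > 0\,] \ge \lfloor k/2 \rfloor + 1$.

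Correctness is a routine induction: every finished part is the connected component of some $T_c$ containing $c$ after the internal cuts and the open part stays connected, so all $k$ parts are connected; exactly $k$ parts are produced; and the $[\,\cdot > 0\,]$ tests count winners correctly. The technical heart, and the step I expect to be the main obstacle, is the running-time bound. The table for $T_v$ has $O(s_v^2)$ meaningful entries where $s_v = |T_v|$ (since $i \le s_v$ and $a \le i$), so merely storing all tables already costs $\Theta(n^3)$ in the worst case (a path); the challenge is to organize the child-merges so that the total also stays $O(n^3)$ rather than the naive $O(n^4)$. The merge is essentially a max-plus convolution in both the count index $i$ and the winner index $a$, and a useful saving is that for the cut option the child's exact open margin is irrelevant, so the child's table can first be compressed to the Boolean information ``is winner bound $a'$ achievable with $i_c$ parts''. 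I would then charge these merges via the standard tree-knapsack argument (summing convolution costs against ancestor/descendant pairs), refined to the extra winner index and following the bookkeeping of Ito et al., to obtain the claimed $O(n^3)$ bound.
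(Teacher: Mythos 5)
Your DP skeleton (root the tree, process children one by one, either cut or keep each child edge, margins of fused open parts add) and your yes/no criterion match the paper's, and your correctness argument is sound --- in fact, by indexing the table by the number $a$ of already-decided winning parts you sidestep an exchange argument that the paper's version needs. The problem is exactly the step you flag as the ``main obstacle'': the running time, and your proposal does not close it. Your table at a vertex with subtree size $s_v$ has $\Theta(s_v^2)$ entries (pairs $(i,a)$ with $a\le i\le s_v$), so merging a child $c$ into the accumulated partial table is a \emph{two-dimensional} max-plus convolution that must combine every entry of one table with every entry of the other; this costs $\Theta\bigl((s_{<c}\, s_c)^2\bigr)$ per merge, and already a single merge at the root of a balanced binary tree costs $\Theta(n^4)$. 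Compressing the child table to Boolean reachability does not help, because the bottleneck is the number of index pairs $(i_1,a_1),(i_2,a_2)$ that must be enumerated, not the values stored; and the standard tree-knapsack charging bounds $\sum s_{<c}\,s_c$ by $O(n^2)$, hence $\sum (s_{<c}\,s_c)^2$ only by $O(n^4)$.

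The paper's algorithm avoids this by eliminating the index $a$ altogether. For each vertex $u$, prefix of children $i$, and part count $k'$, it stores only two numbers: $L^i_u(k')$, the \emph{maximum} number of already-decided uniquely $p$-colored parts, and $W^i_u(k')$, the maximum margin $w_p-w_q$ of the open part among partitions attaining $L^i_u(k')$. This is safe by the following exchange argument, which is where $|C|=2$ does its work beyond collapsing the open part's state to one integer: replacing the induced sub-partition of any global solution by one with strictly more decided winners loses at most one winner overall (namely the part that eventually contains $u$, whose status is worth at most $1$) while gaining at least one, and among partitions with the same maximal $L$ it is always at least as good to maximize the additive margin $W$. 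With only $O(nk)$ table entries and a one-dimensional convolution over the split of $k'$, the total cost is $O(n^2k)\subseteq O(n^3)$. To repair your write-up you would either have to adopt this collapse of the $a$-index (and supply the exchange argument) or give a genuinely new analysis of the two-dimensional merge; as written, your algorithm runs in $\Theta(n^4)$ time in the worst case.
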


\begin{proof}
	We assume that $C = \{ p, q \}$, where $p$ is the target color.
	We provide a polynomial-time algorithm for rooted trees.
	Note that any unrooted tree can be regarded as rooted by choosing an arbitrary vertex as its root.
	Let $r$ be the root of the input graph $G = (V, E)$.
	For each vertex~$v \in V$, let~$G_v$ be the subtree of~$G$ rooted at~$v$.

	Our algorithm is based on dynamic programming.
	We iteratively find partial solutions (which will be defined shortly), starting from the leaves until reaching the root.
	Let $u$ be some vertex of~$G$ and let $v_1, \dots, v_s$ be the children of $u$.
	Let $G_u^0$ be a rooted tree on a single vertex $u$, and for each~$i \in [s]$ let $G_u^i$ be the rooted subtree of $G_u$ induced by $u$ and the vertices of $G_{v_1}, \dots, G_{v_i}$.
	For each vertex~$u$, each~$i \in [s]$, and each~$k' \in [\min(n_u^i,k)]$ (where $n_u^i$ denotes the number of vertices in~$G_u^i$), we define $L^i_{u} \colon \mathbb{N} \to \mathbb{N}$ such that~$L^i_u(k')$ is the maximum number of~$p$-colored parts among all partitions~$\mathcal{V} = (V_1, \dots, V_{k'})$ of the vertices in $G^i_u$.
	Therein, we require that~$G[V_j]$ is connected for each~$j \in [k']$ and that~$u \in V_{k'}$.
	Moreover, we say that the color of~$V_{k'}$ is still undecided as~$V_{k'}$ is the only part that is still connected to the rest of the graph (through the parent of~$u$) and therefore we neglect~$V_{k'}$ when computing~$L^i_u(k')$.
	Further, for each vertex~$u$ and each~$k' \in [\min(n_u^i,k)]$ let
	\begin{align*}
		W^i_u(k') \defeq \max (w_p(V_{k'}) - w_q(V_{k'}))
	\end{align*}
	be the maximum winning margin of~$p$ over~$q$ in~$V_{k'}$ over all~$k'$-partitions~$\mathcal{V}$ of the vertices in $G_u^i$ maximizing~$L^i_u(k')$.
	Observe that a given instance is a yes-instance if and only if $L_r^s(k) + \mathbf{1}[W_r^s(k) > 0] > k / 2$, where~$s$ is the number of children of the root~$r$ and where~$\mathbf{1}[x]$ equals one if the predicate~$x$ is true and zero otherwise.

	We next show how to compute the values of~$L^i_u(k')$ and~$W_u^i(k')$.
	We first initialize the values of $L_u^i(k')$ and~$W_u^i(k')$ for $k' = 1$ as follows:
	\begin{align*}
		L_u^i(1)  &\defeq 0, \\
		W_u^i(1)  &\defeq w_p(V_u^i) - w_q(V_u^i),
	\end{align*}
	where~$V_u^i$ is the set of vertices of~$G_u^i$.
	Note that~$i = 0$ implies~$k' = 1$ as~$G_u^i$ only contains a single vertex.
	Thus, it only remains to compute the values of~$L_u^i$ and~$W_u^i$ for $i > 0$ and~$k' > 1$.
	For a partition $\mathcal{V} = \{ V_1, \dots, V_{k'} \}$ of the vertices of~$G_i^u$ that maximizes $L_u(k')$ and $W_u(k')$, we have two cases: $v_i \in V_{k'}$ or $v_i \notin V_{k'}$.
	If $v_i \notin V_{k'}$, then the edge~$\{u,v_i\}$ is removed and the maximum number of uniquely $p$-colored subsets is the maximum sum of $p$-colored subsets in~$G_u^{i-1}$ and~$G_{v_i}$, that is,
	\begin{align}
		\label{eq:dp2}
		N_u^{i}(k') \defeq \max_{j \in [k' - 1]} & L_u^{i-1}(j) + L_{v_{i}}(k' - j) + \mathbf{1}[W_{v_{i}}(k' - j) > 0].
	\end{align}
	Observe that since~$v_i \notin V_{k'}$, we now count the part that contains~$v_i$ and therefore include the last summand.
	Otherwise (that is, $v_i \in V_{k'}$), then the maximum number of uniquely~\mbox{$p$-colored} subsets is
	\begin{align}
		\label{eq:dp1}
		M_u^{i}(k') \defeq \max_{j \in [k']} L_u^{i-1}(j) + L_{v_i}(k' - j + 1).
	\end{align}
	For the computation of~$L^i_u(k')$ and~$W^i_u(k')$, we have
	\begin{align*}
		L_u^{i}(k') &\defeq \max(N_u^{i}(k'), M_u^{i}(k')), \text{ and}\\
		W_u^{i}(k') &\defeq
		\begin{cases}
			W_u^{i-1}(j) & \text{ if } N_u^{i}(k') > M_u^{i}(k'),\\
			W_u^{i-1}(j') + W_{v_i}(k' - j' + 1) & \text{ if } N_u^{i}(k') < M_u^{i}(k'),\\
			\max(W_u^{i-1}(j), W_u^{i-1}(j') + W_{v_i}(k' - j' + 1) & \text{ if } N_u^{i}(k') = M_u^{i}(k').
		\end{cases}
	\end{align*}
	Here, $j$ and $j'$ are the indices maximizing the terms in Definitions (\ref{eq:dp1}) and~(\ref{eq:dp2}), respectively.
	Regarding the running time, observe that we compute~$O(k \cdot \deg(v))$ table entries for each vertex~$v$.
	Since a tree has $n-1$~edges, we compute by the handshaking lemma in total~$O(nk)$ table entries and computing each table entry requires to sum up at most~$n$ values (weights of vertices or precomputed table entries).
	Thus, the total running time is~${O(n^2 \cdot k) \subseteq O(n^3)}$.
\end{proof}

Finally, we bridge the gap for trees of fixed diameter by generalizing the known polynomial-time algorithm for trees of diameter two~\cite{IKK19} to trees of diameter three.
It is also known that \GerryJ{} on trees of diameter four remains NP-hard \cite{IKK19}.
 
The key observation is that a tree of diameter three can be obtained from two stars by adding an edge between their centers.
Our algorithm then adapts a polynomial-time algorithm for stars~\cite{IKK19}.

\begin{restatable}{proposition}{diamthree}
	\label{thm:diam3}
	For trees of diameter three, \GerryJ{} is solvable in ${O(|C|^2 \cdot n^5)}$~time.
\end{restatable}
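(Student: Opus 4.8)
The plan is to exploit the very rigid structure of diameter-three trees. Every such tree consists of a \emph{central edge} $\{a,b\}$ together with a set $L_a$ of leaves attached to~$a$ and a set $L_b$ of leaves attached to~$b$ (each center has at least one leaf, and there are no other vertices). Hence every solution partition $\mathcal V$ has a simple shape: there is one part $P_a$ containing~$a$ and one part $P_b$ containing~$b$ (possibly $P_a=P_b$), and every other part is a singleton leaf. Writing $S_a := P_a \cap L_a$ and $S_b := P_b \cap L_b$, a solution is completely described by (i) whether $a$ and~$b$ lie in the same part and (ii) the sets $S_a$ and~$S_b$. Moreover, since the number of parts equals $1+(|L_a|-|S_a|)+(|L_b|-|S_b|)$ when the centers share a part and $2+(|L_a|-|S_a|)+(|L_b|-|S_b|)$ otherwise, the sum $|S_a|+|S_b|$ is forced by~$k$. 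The first step, therefore, is to branch over the two cases and over the value $m_a := |S_a| \in \{0,\dots,|L_a|\}$, which also fixes $m_b := |S_b|$ and the number of singleton leaves on each side; the few degenerate configurations (e.g.\ $k\le 2$) are handled separately.

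After this branching, the two stars centered at~$a$ and at~$b$ become almost independent instances of the kind handled by the polynomial-time algorithm for stars of Ito et al.\;\cite[Theorem~4.2]{IKK19}: on the $a$-side one must pick $S_a\subseteq L_a$ of the prescribed size~$m_a$, which determines the color of~$P_a$ via $w_c(P_a)=w_c(a)+\sum_{\ell\in S_a} w_c(\ell)$, while the remaining leaves of~$L_a$ become singletons, each forming a part of its own color. The only interaction between the two sides is the global winning condition, which pits the total number~$U_p$ of uniquely $p$-colored parts against the total number~$N_r$ of $r$-colored parts for every $r\neq p$, with the center part(s) contributing to both sides of this comparison. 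To decouple the sides I would additionally guess the value $u := U_p$ (only $O(n)$ possibilities), turning the condition into ``$U_p \ge u$ and $N_r \le u-1$ for all $r\neq p$'', and guess the color profile of the at most two center parts (for each of them: whether it is uniquely $p$-colored and, if not, a color attaining its maximum weight), which costs a factor $O(|C|^2)$.

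With all of these guesses fixed, each side collapses to a selection problem: among the subsets of~$L_a$ of size~$m_a$ that realize the guessed color behaviour of~$P_a$, find one leaving as many $p$-colored leaves as possible outside (good for~$U_p$) while keeping, for every color $r\neq p$, the number of $r$-colored leaves left outside within the budget dictated by~$u$. I would solve this greedily after sorting each color class by weight: to make~$P_a$ uniquely $p$-colored one absorbs the heaviest $p$-leaves and the lightest leaves of the competing colors, and one ranges over the $O(n)$ choices for how many $p$-leaves are absorbed (and, when $a$ and~$b$ share a part, over the $O(n)$-many ways to split the shared part's weight budget between the two sides). Combining the optimal achievable side-$a$ and side-$b$ contributions and checking $U_p \ge u > \max_{r\neq p} N_r$ decides a single configuration, and the instance is a yes-instance iff some configuration works. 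Tallying the enumeration ($2$~cases, $O(n)$ for~$m_a$, $O(n)$ for~$u$, $O(|C|^2)$ for the center profiles, and $O(n^3)$ for the inner greedy/enumeration over both sides) yields the stated $O(|C|^2 n^5)$ running time.

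The main obstacle is exactly this coupling between the two stars. Because the winning condition is a maximum over all non-target colors and because the center part(s) feed into the $N_r$ counts, one cannot simply solve the two star instances independently and glue the answers; the guesses on~$u$ and on the center profiles are what make a joint argument go through, and the delicate part of the proof is the bookkeeping inside each side — in particular the weight interaction between the $p$-leaves one wishes to keep outside the center part and the non-$p$-leaves one is forced to absorb in order to hit the prescribed center size, as well as verifying that the cheap $O(|C|^2)$ family of center profiles is nonetheless rich enough to capture every relevant case. Everything else is a routine adaptation of the star algorithm of Ito et al.\;\cite{IKK19}.
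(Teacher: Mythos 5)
Your proposal is correct and follows essentially the same route as the paper's proof: decompose the diameter-three tree into two stars joined at their centers, branch on whether the two centers share a part, guess the color(s) of the center part(s) at a cost of $O(|C|^2)$, and reduce each side to the star algorithm of Ito et al.\ via a greedy weight-sorting/exchange argument per color class, with the remaining counting checks. The only differences are in how the enumeration is parametrized (you guess $|S_a|$ and the number of uniquely $p$-colored parts, the paper guesses the numbers of excluded $p$- and $q^*$-colored leaves), which does not change the substance or the $O(|C|^2 n^5)$ bound.
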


\begin{proof}
	First, observe that a tree of diameter three is the same as two stars whose centers are connected by an edge~$e$.
	Let~$r_1$ and~$r_2$ be the two centers of the stars~($e = \{r_1,r_2\}$).
	Our algorithm distinguishes between two cases: (i)~$r_1$ and~$r_2$ belong to the same part in a solution~$\mathcal{V}$, and (ii) they belong to two different parts.\footnote{Technically, our algorithm computes whether there exists a solution for each of the two cases and reports a solution if it finds a solution for any of the two cases or rejects the input if it does not find a solution for any of the two cases.}
	The subalgorithm for case~(i) is completely analogous to the algorithm for \GerryJ{} on stars (trees of diameter two) by Ito et al.\;\cite{IKK19}.
	We will present the whole algorithm for the sake of completeness.
	It will also be helpful in understanding the subalgorithm for case~(ii) (which is an adaptation of the first subalgorithm).
	Both subalgorithms are based on the observation that each part of~$\mathcal{V}$ not containing~$r_1$ or~$r_2$ only consists of a single vertex.

	We start with presenting the subalgorithm for case~(i).
	The algorithm guesses\footnote{Whenever we ``guess'' something, we iterate over all possible cases and test whether this iteration yields a solution. If any iteration yields a solution, then we refer to this iteration in the proof.} a color~$q^*$ such that the part~$V_\ell \in \mathcal{V}$ with~$r_1,r_2 \in V_\ell$ is~$q^*$-colored (uniquely~$p$-colored if~$q^* = p$).
	Moreover, the algorithm guesses the numbers~$\alpha(p)$ and~$\alpha(q^*)$ of~$p$-colored and~$q^*$-colored leaves that are not contained in~$V_\ell$ (those leaves form their own parts in~$\mathcal{V}$).
	Let $x$ be the number of uniquely $p$-colored parts in~$\mathcal{V}$ (note that $x = \alpha(p) + 1$ if $q^* = p$ and $x = \alpha(p)$ otherwise).
	Note that for each color~$q \neq p$ we have to guarantee that there are at most~$x - 1$ parts that are~$q$-colored in~$\mathcal{V}$.
	Let~$n_{q^*}$ be the number of~$q^*$-colored leaves.
	As proven by Ito et al.\;\cite[Lemma 4.3]{IKK19}, one can assume that any $q^*$-colored leaf in $V_\ell$ is at least as heavy as the ones  not in $V_\ell$.
	So we can assume that $w_{q^*}(V_\ell)$ is the sum of the heaviest~$n_{q^*} - \alpha(q^*)$ $q^*$-colored leaves plus $w_{q^*}(\{ r_1, r_2 \})$.
	Similarly (also shown by Ito et al.\;\cite[Lemma 4.3]{IKK19}), we can also assume that any $q$-colored leaf (for $q \ne q^*$) not in $V_\ell$ is at least as heavy as the ones in~$V_\ell$.
	For each color~$q \in C \setminus \{p,q^*\}$, let~$\beta(q)$ be the smallest number of $q$-colored leaves that cannot be included in~$V_\ell$.
	By definition, $\beta(q)$ is the minimum number such that the sum of weights of all but the~$\beta(q)$ heaviest $q$-colored leaves is at most~$w_{q^*}(V_\ell) - w_{q}(\{ r_1, r_2 \})$ (strictly less if $q^* = p$).
	Finally, we verify the following:
	\begin{itemize}
		\item For each $q \in C \setminus \{ p, q^* \}$ it holds that~$\beta(q) < x$ (and~$\beta(q) + 1 < x$ if removing the~$\beta(q)$ heaviest leaves results in~$V_\ell$ being~$q$-colored).
		\item The values of $\beta(q)$ for all colors $q \in C \setminus \{ p, q^* \}$ plus~$\alpha(q^*)$ and~$\alpha(p)$ sums up to at most~$k$.
		\item There are enough leaves that are not~$p$-colored or~$q^*$-colored that can be removed after removing~$\beta(q)$ vertices of each color~$q$ to achieve~$k$ parts.
	\end{itemize}
	If so, then we remove the remaining vertices arbitrarily to obtain a solution.
	
	We continue with the subalgorithm for the case (ii).
	Although the algorithm is somewhat similar to the previous case, we compute the values (namely, $\alpha$ and~$\beta$) for each of the two stars separately.
	We assume without loss of generality that~$r_1 \in V_1 \in \mathcal{V}$ and~$r_2 \in V_2 \in \mathcal{V}$.
	We first guess two colors~$q^*_1$ and $q^*_2$ such that the part~$V_i \in \mathcal{V}$ with~$i \in \{1,2\}$ is~$q^*_i$-colored (uniquely~$p$-colored if~${q^*_i = p}$).
	We then guess the numbers~$\alpha_1(p)$,~$\alpha_2(p)$,~$\alpha_1(q^*_1)$ and~$\alpha_2(q^*_2)$ of~$p$-colored and~$q^*_i$-colored leaves adjacent to~$r_i$ that are not contained in~$V_i$.
	From these guesses, we analogously compute the number~$x$ of uniquely~$p$-colored parts and for~$i \in \{1,2\}$ the sum~$w_{q_i^*}(V_i)$ of weights of all~$q^*_i$-colored vertices in~$V_i$.
	Next, we compute for each $i \in \{1,2\}$ and each color~$q \notin \{p,q^*_i\}$ the minimum number~$\beta_i(q)$ of leaves that are adjacent to~$r_i$ but that cannot be contained in~$V_i$.
	Now it remains to check for the special case whether after removing~$\beta_i(q)$ vertices the part~$V_i$ is~$q$-colored and to check whether~$\beta_1(q) + \beta_2(q) < x$.
	Finally, we verify whether the sum over all~$\beta$-values plus~$\alpha_i(q^*_i)$ and~$\alpha_i(p)$ is at most~$k$ and whether there are enough leaves that are not~$p$ or~$q^*_i$-colored that can be removed after removing~$\beta_i(q)$ vertices of each color~$q$ from the star with center~$r_i$.
	If so, then we remove the remaining vertices arbitrarily to obtain a solution.
	
	Finally, it remains to analyze the running time.
	First, we sort all leaves of each color by their respective weights in~$O(n \log n)$ time.
	Second we guess the color(s)~$q_{(i)}^*$ and the numbers~$\alpha_{(i)}(p)$ and~$\alpha_{(i)}(q^*_{(i)})$.
	There are~$O(|C|^2 \cdot n^4)$ possible guesses.
	Third, we compute~$w_{q_{(i^*)}}(V_{(i)})$ and~$\beta_{(i)}(q)$ for all~$q \notin \{p,q^*_{(i)}\}$ in overall linear time.
	The final checks also take~$O(n)$ time, resulting in the claimed running time~$O(|C|^2 \cdot n^5)$.
\end{proof}

\Cref{thm:diam3} yields a complexity dichotomy for trees with respect to the diameter parameter.
Clearly, our polynomial-time solvability is mainly of classification nature; it remains a future task to lower the degree in the polynomial of the running time.

\section{Conclusion}
Answering open questions of Ito et al.\;\cite{IKK19} in the negative,
we presented an NP-hardness result on paths and a weak NP-hardness result on trees.
Now, one may claim that the computational complexity of \GerryJ{} restricted to 
paths and trees is well-understood.
The results indicate that, through the lens of worst-case complexity analysis, 
\GerryJ{} is extremely hard. Indeed, from our findings one can also deduce negative 
results in terms of parameterized complexity analysis, that is NP-hardness for constant values of each single graph parameter vertex cover number, maximum leaf number, 
and vertex deletion number to cliques. In parameterized complexity theory, these are 
among the ``weakest'' parameters.

As previous work, we focused on the Plurality voting rule, leaving open
to study \GerryJ{} also for other voting rules.
Moreover, we focused on theoretical results. Since worst-case intractability 
is clearly no shield against susceptibility of real-world instances to gerrymandering,
following the example of Cohen-Zemach et al.\;\cite{ZLR18}
it may be promising to investigate the empirical issues. 

\bibliographystyle{plainnat}
\bibliography{literature.bib}

\begin{thebibliography}{12}
\providecommand{\natexlab}[1]{#1}
\providecommand{\url}[1]{\texttt{#1}}
\expandafter\ifx\csname urlstyle\endcsname\relax
  \providecommand{\doi}[1]{doi: #1}\else
  \providecommand{\doi}{doi: \begingroup \urlstyle{rm}\Url}\fi

\bibitem[Apollonio et~al.(2009)Apollonio, Becker, Lari, Ricca, and
  Simeone]{ABL09}
Nicola Apollonio, Ronald~I. Becker, Isabella Lari, Federica Ricca, and Bruno
  Simeone.
\newblock Bicolored graph partitioning, or: gerrymandering at its worst.
\newblock \emph{Discrete Applied Mathematics}, 157\penalty0 (17):\penalty0
  3601--3614, 2009.

\bibitem[Borodin et~al.(2018)Borodin, Lev, Shah, and Strangway]{BLS18}
Allan Borodin, Omer Lev, Nisarg Shah, and Tyrone Strangway.
\newblock Big city vs. the great outdoors: Voter distribution and how it
  affects gerrymandering.
\newblock In \emph{Proceedings of the 27th International Joint Conference on
  Artificial Intelligence ({IJCAI~'18})}, pages 98--104. ijcai.org, 2018.

\bibitem[Chatterjee et~al.(2020)Chatterjee, DasGupta, Palmieri, Al{-}Qurashi,
  and Sidiropoulos]{CD19}
Tanima Chatterjee, Bhaskar DasGupta, Laura Palmieri, Zainab Al{-}Qurashi, and
  Anastasios Sidiropoulos.
\newblock On theoretical and empirical algorithmic analysis of the efficiency
  gap measure in partisan gerrymandering.
\newblock \emph{Journal of Combinatorial Optimization}, 40\penalty0
  (2):\penalty0 512--546, 2020.

\bibitem[Cohen{-}Zemach et~al.(2018)Cohen{-}Zemach, Lewenberg, and
  Rosenschein]{ZLR18}
Amittai Cohen{-}Zemach, Yoad Lewenberg, and Jeffrey~S. Rosenschein.
\newblock Gerrymandering over graphs.
\newblock In \emph{Proceedings of the 17th International Conference on
  Autonomous Agents and MultiAgent Systems ({AAMAS}~'18)}, pages 274--282.
  International Foundation for Autonomous Agents and Multiagent Systems, 2018.

\bibitem[Eiben et~al.(2020)Eiben, Fomin, Panolan, and Simonov]{EFPS20}
Eduard Eiben, Fedor~V. Fomin, Fahad Panolan, and Kirill Simonov.
\newblock Manipulating districts to win elections: Fine-grained complexity.
\newblock In \emph{Proceedings of the 34th {AAAI} Conference on Artificial
  Intelligence ({AAAI} '20)}, pages 1902--1909. {AAAI} Press, 2020.

\bibitem[Garey and Johnson(1979)]{GJ79}
M.~R. Garey and David~S. Johnson.
\newblock \emph{Computers and Intractability: {A} Guide to the Theory of
  NP-Completeness}.
\newblock W. H. Freeman, 1979.

\bibitem[Guest et~al.(2019)Guest, Kanayet, and Love]{GKL19}
Olivia Guest, Frank~J. Kanayet, and Bradeley~C. Love.
\newblock Gerrymandering and computational redistricting.
\newblock \emph{Journal of Computational Social Science}, 2:\penalty0 119--131,
  2019.

\bibitem[Ito et~al.(2019)Ito, Kamiyama, Kobayashi, and Okamoto]{IKK19}
Takehiro Ito, Naoyuki Kamiyama, Yusuke Kobayashi, and Yoshio Okamoto.
\newblock Algorithms for gerrymandering over graphs.
\newblock In \emph{Proceedings of the 18th International Conference on
  Autonomous Agents and MultiAgent Systems ({AAMAS}~'19)}, pages 1413--1421.
  International Foundation for Autonomous Agents and Multiagent Systems, 2019.

\bibitem[Lewenberg et~al.(2017)Lewenberg, Lev, and Rosenschein]{LLR17}
Yoad Lewenberg, Omer Lev, and Jeffrey~S. Rosenschein.
\newblock Divide and conquer: Using geographic manipulation to win
  district-based elections.
\newblock In \emph{Proceedings of the 16th International Conference on
  Autonomous Agents and Multiagent Systems ({AAMAS}~'17)}, pages 624--632.
  {ACM}, 2017.

\bibitem[Mathieson and Szeider(2012)]{MS12}
Luke Mathieson and Stefan Szeider.
\newblock Editing graphs to satisfy degree constraints: {A} parameterized
  approach.
\newblock \emph{Journal of Computer and System Sciences}, 78\penalty0
  (1):\penalty0 179--191, 2012.

\bibitem[Schuck(1987)]{Sch87}
Peter~H. Schuck.
\newblock The thickest thicket: {P}artisan gerrymandering and judical
  regulation of politics.
\newblock \emph{Columbia Law Review}, 87\penalty0 (7):\penalty0 1325--1384,
  1987.

\bibitem[Tasn\'{a}di(2011)]{Tas11}
Attila Tasn\'{a}di.
\newblock The political districting problem: a survey.
\newblock \emph{Society and Economy}, 33\penalty0 (3):\penalty0 543--554, 2011.

\end{thebibliography}

\end{document}